\newcommand{\warningsign}{\raisebox{1pt}{\fontencoding{U}\fontfamily{futs}\selectfont\char 66\relax}\xspace}
  \newcommand{\showoptional}{1}
  \newcommand{\ismain}{0}
\newcommand{\past}{\textbf{past}}
\newcommand{\inp}[1]{#1^{i}}
\newcommand{\outp}[1]{#1^{o}}
\tikzstyle{env}=[copoint,regular polygon rotate=0,minimum width=0.2cm, fill=black]
\tikzstyle{probs}=[shape=semicircle,fill=white,draw=black,shape border rotate=180,minimum width=1.2cm]
\tikzstyle{every picture}=[baseline=-0.25em,scale=0.5]
\tikzstyle{dotpic}=[] 
\tikzstyle{diredges}=[every to/.style={diredge}]
\tikzstyle{math matrix}=[matrix of math nodes,left delimiter=(,right delimiter=),inner sep=2pt,column sep=1em,row sep=0.5em,nodes={inner sep=0pt},text height=1.5ex, text depth=0.25ex]
\tikzstyle{inline text}=[text height=1.5ex, text depth=0.25ex,yshift=0.5mm]
\tikzstyle{label}=[font=\footnotesize,text height=1.5ex, text depth=0.25ex,yshift=0.5mm]
\tikzstyle{left label}=[label,anchor=east,xshift=1.5mm]
\tikzstyle{right label}=[label,anchor=west,xshift=-1.5mm]
\tikzstyle{braceedge}=[decorate,decoration={brace,amplitude=2mm,raise=-1mm}]
\tikzstyle{small braceedge}=[decorate,decoration={brace,amplitude=1mm,raise=-1mm}]
\tikzstyle{doubled}=[line width=1.6pt] 
\tikzstyle{boldedge}=[doubled,shorten <=-0.17mm,shorten >=-0.17mm]
\tikzstyle{boldedgegray}=[doubled,gray,shorten <=-0.17mm,shorten >=-0.17mm]
\tikzstyle{singleedgegray}=[gray]
\tikzstyle{semidoubled}=[line width=1.4pt] 
\tikzstyle{semiboldedgegray}=[semidoubled,gray,shorten <=-0.17mm,shorten >=-0.17mm]
\tikzstyle{boxedge}=[semiboldedgegray]
\tikzstyle{boldedgedashed}=[very thick,dashed,shorten <=-0.17mm,shorten >=-0.17mm]
\tikzstyle{vboldedgedashed}=[doubled,dashed,shorten <=-0.17mm,shorten >=-0.17mm]
\tikzstyle{left hook arrow}=[left hook-latex]
\tikzstyle{right hook arrow}=[right hook-latex]
\tikzstyle{sembracket}=[line width=0.5pt,shorten <=-0.07mm,shorten >=-0.07mm]
\tikzstyle{causal edge}=[->,thick,gray]
\tikzstyle{causal nondir}=[thick,gray]
\tikzstyle{timeline}=[thick,gray, dashed]
\tikzstyle{cedge}=[<->,thick,gray!70!white]
\tikzstyle{empty diagram}=[draw=gray!40!white,dashed,shape=rectangle,minimum width=1cm,minimum height=1cm]
\tikzstyle{empty diagram small}=[draw=gray!50!white,dashed,shape=rectangle,minimum width=0.6cm,minimum height=0.5cm]
\tikzstyle{dot}=[inner sep=0mm,minimum width=2mm,minimum height=2mm,draw,shape=circle]  
\tikzstyle{Wsquare}=[white dot, shape=regular polygon, rounded corners=0.8 mm, minimum size=3.3 mm, regular polygon sides=3, outer sep=-0.2mm]
\tikzstyle{Wsquareadj}=[white dot, shape=regular polygon, rounded corners=0.8 mm, minimum size=3.3 mm, regular polygon sides=3, outer sep=-0.2mm, regular polygon rotate=180]
\tikzstyle{ddot}=[inner sep=0mm, doubled, minimum width=2.5mm,minimum height=2.5mm,draw,shape=circle]
\tikzstyle{black dot}=[dot,fill=black]
\tikzstyle{white dot}=[dot,fill=white,,text depth=-0.2mm]
\tikzstyle{white Wsquare}=[Wsquare,fill=gray,,text depth=-0.2mm]
\tikzstyle{white Wsquareadj}=[Wsquareadj,fill=white,,text depth=-0.2mm]
\tikzstyle{green dot}=[white dot] 
\tikzstyle{gray dot}=[dot,fill=gray!40!white,,text depth=-0.2mm]
\tikzstyle{red dot}=[gray dot] 
\tikzstyle{black ddot}=[ddot,fill=black]
\tikzstyle{white ddot}=[ddot,fill=white]
\tikzstyle{gray ddot}=[ddot,fill=gray!40!white]
\tikzstyle{gray edge}=[gray!60!white]
\tikzstyle{small dot}=[inner sep=0.5mm,minimum width=0pt,minimum height=0pt,draw,shape=circle]
\tikzstyle{small black dot}=[small dot,fill=black]
\tikzstyle{small white dot}=[small dot,fill=white]
\tikzstyle{small gray dot}=[small dot,fill=gray!40!white]
\tikzstyle{causal dot}=[inner sep=0.4mm,minimum width=0pt,minimum height=0pt,draw=white,shape=circle,fill=gray!40!white]
\tikzstyle{phase dimensions}=[minimum size=5mm,font=\footnotesize,rectangle,rounded corners=2.5mm,inner sep=0.2mm,outer sep=-2mm]
\tikzstyle{dphase dimensions}=[minimum size=5mm,font=\footnotesize,rectangle,rounded corners=2.5mm,inner sep=0.2mm,outer sep=-2mm]
\tikzstyle{white phase dot}=[dot,fill=white,phase dimensions]
\tikzstyle{white phase ddot}=[ddot,fill=white,dphase dimensions]
\tikzstyle{white rect ddot}=[draw=black,fill=white,doubled,minimum size=5mm,font=\footnotesize,rectangle,rounded corners=2.5mm,inner sep=0.2mm]
\tikzstyle{gray rect ddot}=[draw=black,fill=gray!40!white,doubled,minimum size=6mm,font=\footnotesize,rectangle,rounded corners=3mm]
\tikzstyle{gray phase dot}=[dot,fill=gray!40!white,phase dimensions]
\tikzstyle{gray phase ddot}=[ddot,fill=gray!40!white,dphase dimensions]
\tikzstyle{grey phase dot}=[gray phase dot]
\tikzstyle{grey phase ddot}=[gray phase ddot]
\tikzstyle{small phase dimensions}=[minimum size=4mm,font=\tiny,rectangle,rounded corners=2mm,inner sep=0.2mm,outer sep=-2mm]
\tikzstyle{small dphase dimensions}=[minimum size=4mm,font=\tiny,rectangle,rounded corners=2mm,inner sep=0.2mm,outer sep=-2mm]
\tikzstyle{small gray phase dot}=[dot,fill=gray!40!white,small phase dimensions]
\tikzstyle{small gray phase ddot}=[ddot,fill=gray!40!white,small dphase dimensions]
\tikzstyle{small map}=[draw,shape=rectangle,minimum height=4mm,minimum width=4mm,fill=white]
\tikzstyle{cnot}=[fill=white,shape=circle,inner sep=-1.4pt]
\tikzstyle{asym hadamard}=[fill=white,draw,shape=NEbox,inner sep=0.6mm,font=\footnotesize,minimum height=4mm]
\tikzstyle{asym hadamard conj}=[fill=white,draw,shape=NWbox,inner sep=0.6mm,font=\footnotesize,minimum height=4mm]
\tikzstyle{asym hadamard dag}=[fill=white,draw,shape=SEbox,inner sep=0.6mm,font=\footnotesize,minimum height=4mm]
\tikzstyle{hadamard}=[fill=white,draw,inner sep=0.6mm,font=\footnotesize,minimum height=4mm,minimum width=4mm]
\tikzstyle{small hadamard}=[fill=white,draw,inner sep=0.6mm,minimum height=1.5mm,minimum width=1.5mm]
\tikzstyle{small hadamard rotate}=[small hadamard,rotate=45]
\tikzstyle{dhadamard}=[hadamard,doubled]
\tikzstyle{small dhadamard}=[small hadamard,doubled]
\tikzstyle{small dhadamard rotate}=[small hadamard rotate,doubled]
\tikzstyle{antipode}=[white dot,inner sep=0.3mm,font=\footnotesize]
\tikzstyle{scalar}=[diamond,draw,inner sep=0.5pt,font=\small]
\tikzstyle{dscalar}=[diamond,doubled, draw,inner sep=0.5pt,font=\small]
\tikzstyle{small box}=[rectangle,inline text,fill=white,draw,minimum height=5mm,yshift=-0.5mm,minimum width=5mm,font=\small]
\tikzstyle{small gray box}=[small box,fill=gray!30]
\tikzstyle{medium box}=[rectangle,inline text,fill=white,draw,minimum height=5mm,yshift=-0.5mm,minimum width=10mm,font=\small]
\tikzstyle{square box}=[small box] 
\tikzstyle{medium gray box}=[small box,fill=gray!30]
\tikzstyle{semilarge box}=[rectangle,inline text,fill=white,draw,minimum height=5mm,yshift=-0.5mm,minimum width=12.5mm,font=\small]
\tikzstyle{large box}=[rectangle,inline text,fill=white,draw,minimum height=5mm,yshift=-0.5mm,minimum width=15mm,font=\small]
\tikzstyle{large gray box}=[small box,fill=gray!30]
\tikzstyle{Bayes box}=[rectangle,fill=black,draw, minimum height=3mm, minimum width=3mm]
\tikzstyle{gray square point}=[small box,fill=gray!50]
\tikzstyle{dphase box white}=[dhadamard]
\tikzstyle{dphase box gray}=[dhadamard,fill=gray!50!white]
\tikzstyle{phase box white}=[hadamard]
\tikzstyle{phase box gray}=[hadamard,fill=gray!50!white]
\tikzstyle{point}=[regular polygon,regular polygon sides=3,draw,scale=0.75,inner sep=-0.5pt,minimum width=9mm,fill=white,regular polygon rotate=180]
\tikzstyle{point nosep}=[regular polygon,regular polygon sides=3,draw,scale=0.75,inner sep=-2pt,minimum width=9mm,fill=white,regular polygon rotate=180]
\tikzstyle{copoint}=[regular polygon,regular polygon sides=3,draw,scale=0.75,inner sep=-0.5pt,minimum width=9mm,fill=white]
\tikzstyle{dpoint}=[point,doubled]
\tikzstyle{dcopoint}=[copoint,doubled]
\tikzstyle{pointgrow}=[shape=cornerpoint,kpoint common,scale=0.75,inner sep=3pt]
\tikzstyle{pointgrow dag}=[shape=cornercopoint,kpoint common,scale=0.75,inner sep=3pt]
\tikzstyle{wide copoint}=[fill=white,draw,shape=isosceles triangle,shape border rotate=90,isosceles triangle stretches=true,inner sep=0pt,minimum width=1.5cm,minimum height=6.12mm]
\tikzstyle{wide point}=[fill=white,draw,shape=isosceles triangle,shape border rotate=-90,isosceles triangle stretches=true,inner sep=0pt,minimum width=1.5cm,minimum height=6.12mm,yshift=-0.0mm]
\tikzstyle{wide point plus}=[fill=white,draw,shape=isosceles triangle,shape border rotate=-90,isosceles triangle stretches=true,inner sep=0pt,minimum width=1.74cm,minimum height=7mm,yshift=-0.0mm]
\tikzstyle{wide dpoint}=[fill=white,doubled,draw,shape=isosceles triangle,shape border rotate=-90,isosceles triangle stretches=true,inner sep=0pt,minimum width=1.5cm,minimum height=6.12mm,yshift=-0.0mm]
\tikzstyle{tinypoint}=[regular polygon,regular polygon sides=3,draw,scale=0.55,inner sep=-0.15pt,minimum width=6mm,fill=white,regular polygon rotate=180] 
\tikzstyle{white point}=[point]
\tikzstyle{white dpoint}=[dpoint]
\tikzstyle{green point}=[white point] 
\tikzstyle{white copoint}=[copoint]
\tikzstyle{gray point}=[point,fill=gray!40!white]
\tikzstyle{gray dpoint}=[gray point,doubled]
\tikzstyle{red point}=[gray point] 
\tikzstyle{gray copoint}=[copoint,fill=gray!40!white]
\tikzstyle{gray dcopoint}=[gray copoint,doubled]
\tikzstyle{white point guide}=[regular polygon,regular polygon sides=3,font=\scriptsize,draw,scale=0.65,inner sep=-0.5pt,minimum width=9mm,fill=white,regular polygon rotate=180]
\tikzstyle{black point}=[point,fill=black,font=\color{white}]
\tikzstyle{black copoint}=[copoint,fill=black,font=\color{white}]
\tikzstyle{tiny gray point}=[tinypoint,fill=gray!40!white]
\tikzstyle{diredge}=[->]
\tikzstyle{ddiredge}=[<->]
\tikzstyle{rdiredge}=[<-]
\tikzstyle{thickdiredge}=[->, very thick]
\tikzstyle{pointer edge}=[->,very thick,gray]
\tikzstyle{pointer edge part}=[very thick,gray]
\tikzstyle{dashed edge}=[dashed]
\tikzstyle{thick dashed edge}=[very thick,dashed]
\tikzstyle{thick gray dashed edge}=[thick dashed edge,gray!40]
\tikzstyle{thick map edge}=[very thick,|->]
\newcommand{\boxshape}[3]{%
\pgfdeclareshape{#1}{
\inheritsavedanchors[from=rectangle] 
\inheritanchorborder[from=rectangle]
\inheritanchor[from=rectangle]{center}
\inheritanchor[from=rectangle]{north}
\inheritanchor[from=rectangle]{south}
\inheritanchor[from=rectangle]{west}
\inheritanchor[from=rectangle]{east}
\backgroundpath{
\southwest \pgf@xa=\pgf@x \pgf@ya=\pgf@y
\northeast \pgf@xb=\pgf@x \pgf@yb=\pgf@y

\@tempdima=#2
\@tempdimb=#3

\pgfpathmoveto{\pgfpoint{\pgf@xa - 5pt + \@tempdima}{\pgf@ya}}
\pgfpathlineto{\pgfpoint{\pgf@xa - 5pt - \@tempdima}{\pgf@yb}}
\pgfpathlineto{\pgfpoint{\pgf@xb + 5pt + \@tempdimb}{\pgf@yb}}
\pgfpathlineto{\pgfpoint{\pgf@xb + 5pt - \@tempdimb}{\pgf@ya}}
\pgfpathlineto{\pgfpoint{\pgf@xa - 5pt + \@tempdima}{\pgf@ya}}
\pgfpathclose
}
}}
\tikzstyle{cloud}=[shape=cloud,draw,minimum width=1.5cm,minimum height=1.5cm]
\tikzstyle{map}=[draw,shape=NEbox,inner sep=2pt,minimum height=6mm,fill=white]
\tikzstyle{dashedmap}=[draw,dashed,shape=NEbox,inner sep=2pt,minimum height=6mm,fill=white]
\tikzstyle{mapdag}=[draw,shape=SEbox,inner sep=2pt,minimum height=6mm,fill=white]
\tikzstyle{mapadj}=[draw,shape=SEbox,inner sep=2pt,minimum height=6mm,fill=white]
\tikzstyle{maptrans}=[draw,shape=SWbox,inner sep=2pt,minimum height=6mm,fill=white]
\tikzstyle{mapconj}=[draw,shape=NWbox,inner sep=2pt,minimum height=6mm,fill=white]
\tikzstyle{medium map}=[draw,shape=NEbox,inner sep=2pt,minimum height=6mm,fill=white,minimum width=7mm]
\tikzstyle{medium map dag}=[draw,shape=SEbox,inner sep=2pt,minimum height=6mm,fill=white,minimum width=7mm]
\tikzstyle{medium map adj}=[draw,shape=SEbox,inner sep=2pt,minimum height=6mm,fill=white,minimum width=7mm]
\tikzstyle{medium map trans}=[draw,shape=SWbox,inner sep=2pt,minimum height=6mm,fill=white,minimum width=7mm]
\tikzstyle{medium map conj}=[draw,shape=NWbox,inner sep=2pt,minimum height=6mm,fill=white,minimum width=7mm]
\tikzstyle{semilarge map}=[draw,shape=NEbox,inner sep=2pt,minimum height=6mm,fill=white,minimum width=9.5mm]
\tikzstyle{semilarge map trans}=[draw,shape=SWbox,inner sep=2pt,minimum height=6mm,fill=white,minimum width=9.5mm]
\tikzstyle{semilarge map adj}=[draw,shape=SEbox,inner sep=2pt,minimum height=6mm,fill=white,minimum width=9.5mm]
\tikzstyle{semilarge map dag}=[draw,shape=SEbox,inner sep=2pt,minimum height=6mm,fill=white,minimum width=9.5mm]
\tikzstyle{semilarge map conj}=[draw,shape=NWbox,inner sep=2pt,minimum height=6mm,fill=white,minimum width=9.5mm]
\tikzstyle{large map}=[draw,shape=NEbox,inner sep=2pt,minimum height=6mm,fill=white,minimum width=12mm]
\tikzstyle{large map conj}=[draw,shape=NWbox,inner sep=2pt,minimum height=6mm,fill=white,minimum width=12mm]
\tikzstyle{very large map}=[draw,shape=NEbox,inner sep=2pt,minimum height=6mm,fill=white,minimum width=17mm]
\tikzstyle{medium dmap}=[draw,doubled,shape=NEbox,inner sep=2pt,minimum height=6mm,fill=white,minimum width=7mm]
\tikzstyle{medium dmap dag}=[draw,doubled,shape=SEbox,inner sep=2pt,minimum height=6mm,fill=white,minimum width=7mm]
\tikzstyle{medium dmap adj}=[draw,doubled,shape=SEbox,inner sep=2pt,minimum height=6mm,fill=white,minimum width=7mm]
\tikzstyle{medium dmap trans}=[draw,doubled,shape=SWbox,inner sep=2pt,minimum height=6mm,fill=white,minimum width=7mm]
\tikzstyle{medium dmap conj}=[draw,doubled,shape=NWbox,inner sep=2pt,minimum height=6mm,fill=white,minimum width=7mm]
\tikzstyle{semilarge dmap}=[draw,doubled,shape=NEbox,inner sep=2pt,minimum height=6mm,fill=white,minimum width=9.5mm]
\tikzstyle{semilarge dmap trans}=[draw,doubled,shape=SWbox,inner sep=2pt,minimum height=6mm,fill=white,minimum width=9.5mm]
\tikzstyle{semilarge dmap adj}=[draw,doubled,shape=SEbox,inner sep=2pt,minimum height=6mm,fill=white,minimum width=9.5mm]
\tikzstyle{semilarge dmap dag}=[draw,doubled,shape=SEbox,inner sep=2pt,minimum height=6mm,fill=white,minimum width=9.5mm]
\tikzstyle{semilarge dmap conj}=[draw,doubled,shape=NWbox,inner sep=2pt,minimum height=6mm,fill=white,minimum width=9.5mm]
\tikzstyle{large dmap}=[draw,doubled,shape=NEbox,inner sep=2pt,minimum height=6mm,fill=white,minimum width=12mm]
\tikzstyle{large dmap conj}=[draw,doubled,shape=NWbox,inner sep=2pt,minimum height=6mm,fill=white,minimum width=12mm]
\tikzstyle{large dmap trans}=[draw,doubled,shape=SWbox,inner sep=2pt,minimum height=6mm,fill=white,minimum width=12mm]
\tikzstyle{large dmap adj}=[draw,doubled,shape=SEbox,inner sep=2pt,minimum height=6mm,fill=white,minimum width=12mm]
\tikzstyle{large dmap dag}=[draw,doubled,shape=SEbox,inner sep=2pt,minimum height=6mm,fill=white,minimum width=12mm]
\tikzstyle{very large dmap}=[draw,doubled,shape=NEbox,inner sep=2pt,minimum height=6mm,fill=white,minimum width=19.5mm]
\tikzstyle{muxbox}=[draw,shape=rectangle,minimum height=3mm,minimum width=3mm,fill=white]
\tikzstyle{dmuxbox}=[muxbox,doubled]
\tikzstyle{box}=[draw,shape=rectangle,inner sep=2pt,minimum height=6mm,minimum width=6mm,fill=white]
\tikzstyle{dbox}=[draw,doubled,shape=rectangle,inner sep=2pt,minimum height=6mm,minimum width=6mm,fill=white]
\tikzstyle{dmap}=[draw,doubled,shape=NEbox,inner sep=2pt,minimum height=6mm,fill=white]
\tikzstyle{dmapdag}=[draw,doubled,shape=SEbox,inner sep=2pt,minimum height=6mm,fill=white]
\tikzstyle{dmapadj}=[draw,doubled,shape=SEbox,inner sep=2pt,minimum height=6mm,fill=white]
\tikzstyle{dmaptrans}=[draw,doubled,shape=SWbox,inner sep=2pt,minimum height=6mm,fill=white]
\tikzstyle{dmapconj}=[draw,doubled,shape=NWbox,inner sep=2pt,minimum height=6mm,fill=white]
\tikzstyle{ddmap}=[draw,doubled,dashed,shape=NEbox,inner sep=2pt,minimum height=6mm,fill=white]
\tikzstyle{ddmapdag}=[draw,doubled,dashed,shape=SEbox,inner sep=2pt,minimum height=6mm,fill=white]
\tikzstyle{ddmapadj}=[draw,doubled,dashed,shape=SEbox,inner sep=2pt,minimum height=6mm,fill=white]
\tikzstyle{ddmaptrans}=[draw,doubled,dashed,shape=SWbox,inner sep=2pt,minimum height=6mm,fill=white]
\tikzstyle{ddmapconj}=[draw,doubled,dashed,shape=NWbox,inner sep=2pt,minimum height=6mm,fill=white]
\tikzstyle{smap}=[draw,shape=sNEbox,fill=white]
\tikzstyle{smapdag}=[draw,shape=sSEbox,fill=white]
\tikzstyle{smapadj}=[draw,shape=sSEbox,fill=white]
\tikzstyle{smaptrans}=[draw,shape=sSWbox,fill=white]
\tikzstyle{smapconj}=[draw,shape=sNWbox,fill=white]
\tikzstyle{dsmap}=[draw,dashed,shape=sNEbox,fill=white]
\tikzstyle{dsmapdag}=[draw,dashed,shape=sSEbox,fill=white]
\tikzstyle{dsmaptrans}=[draw,dashed,shape=sSWbox,fill=white]
\tikzstyle{dsmapconj}=[draw,dashed,shape=sNWbox,fill=white]
\tikzstyle{mmap}=[draw,shape=mNEbox]
\tikzstyle{mmapdag}=[draw,shape=mSEbox]
\tikzstyle{mmaptrans}=[draw,shape=mSWbox]
\tikzstyle{mmapconj}=[draw,shape=mNWbox]
\tikzstyle{mmapgray}=[draw,fill=gray!40!white,shape=mNEbox]
\tikzstyle{smapgray}=[draw,fill=gray!40!white,shape=sNEbox]
\pgfmathsetmacro{\pgf@shorten@left}{\pgfkeysvalueof{/tikz/shorten left}}
\pgfmathsetmacro{\pgf@shorten@right}{\pgfkeysvalueof{/tikz/shorten right}}
\pgfmathsetmacro{\pgf@shorten@left}{\pgfkeysvalueof{/tikz/shorten left}}
\pgfmathsetmacro{\pgf@shorten@right}{\pgfkeysvalueof{/tikz/shorten right}}
\tikzstyle{kpoint common}=[draw,fill=white,inner sep=1pt,minimum height=4mm]
\tikzstyle{kpoint sc}=[shape=cornerpoint,kpoint common]
\tikzstyle{kpoint adjoint sc}=[shape=cornercopoint,kpoint common]
\tikzstyle{kpoint}=[shape=cornerpoint,shorten left=5pt,kpoint common]
\tikzstyle{kpoint adjoint}=[shape=cornercopoint,shorten left=5pt,kpoint common]
\tikzstyle{kpoint conjugate}=[shape=cornerpoint,shorten right=5pt,kpoint common]
\tikzstyle{kpoint transpose}=[shape=cornercopoint,shorten right=5pt,kpoint common]
\tikzstyle{kpoint symm}=[shape=cornerpoint,shorten left=5pt,shorten right=5pt,kpoint common]
\tikzstyle{wide kpoint sc}=[shape=cornerpoint,kpoint common, minimum width=1 cm]
\tikzstyle{wide kpointdag sc}=[shape=cornercopoint,kpoint common, minimum width=1 cm]
\tikzstyle{black kpoint}=[shape=cornerpoint,shorten left=5pt,kpoint common,fill=black,font=\color{white}]
\tikzstyle{black kpoint sm}=[shape=cornerpoint,shorten left=5pt,kpoint common,fill=black,font=\color{white},scale=0.75]
\tikzstyle{black kpoint adjoint}=[shape=cornercopoint,shorten left=5pt,kpoint common,fill=black,font=\color{white}]
\tikzstyle{black kpointadj}=[shape=cornercopoint,shorten left=5pt,kpoint common,fill=black,font=\color{white}]
\tikzstyle{black kpointadj sm}=[shape=cornercopoint,shorten left=5pt,kpoint common,fill=black,font=\color{white},scale=0.75]
\tikzstyle{black dkpoint}=[shape=cornerpoint,shorten left=5pt,kpoint common,fill=black, doubled,font=\color{white}]
\tikzstyle{black dkpoint adjoint}=[shape=cornercopoint,shorten left=5pt,kpoint common,fill=black, doubled,font=\color{white}]
\tikzstyle{black dkpointadj}=[shape=cornercopoint,shorten left=5pt,kpoint common,fill=black, doubled,font=\color{white}]
\tikzstyle{black dkpoint sm}=[shape=cornerpoint,shorten left=5pt,kpoint common,fill=black, doubled,font=\color{white},scale=0.75]
\tikzstyle{black dkpointadj sm}=[shape=cornercopoint,shorten left=5pt,kpoint common,fill=black, doubled,font=\color{white},scale=0.75] 
\tikzstyle{kpointdag}=[kpoint adjoint]
\tikzstyle{kpointadj}=[kpoint adjoint]
\tikzstyle{kpointconj}=[kpoint conjugate]
\tikzstyle{kpointtrans}=[kpoint transpose]
\tikzstyle{big kpoint}=[kpoint, minimum width=1.2 cm, minimum height=8mm, inner sep=4pt, text depth=3mm]
\tikzstyle{wide kpoint}=[kpoint, minimum width=1 cm, inner sep=2pt]
\tikzstyle{wide kpointdag}=[kpointdag, minimum width=1 cm, inner sep=2pt]
\tikzstyle{wide kpointconj}=[kpointconj, minimum width=1 cm, inner sep=2pt]
\tikzstyle{wide kpointtrans}=[kpointtrans, minimum width=1 cm, inner sep=2pt]
\tikzstyle{wider kpoint}=[kpoint, minimum width=1.25 cm, inner sep=2pt]
\tikzstyle{wider kpointdag}=[kpointdag, minimum width=1.25 cm, inner sep=2pt]
\tikzstyle{wider kpointconj}=[kpointconj, minimum width=1.25 cm, inner sep=2pt]
\tikzstyle{wider kpointtrans}=[kpointtrans, minimum width=1.25 cm, inner sep=2pt]
\tikzstyle{gray kpoint}=[kpoint,fill=gray!50!white]
\tikzstyle{gray kpointdag}=[kpointdag,fill=gray!50!white]
\tikzstyle{gray kpointadj}=[kpointadj,fill=gray!50!white]
\tikzstyle{gray kpointconj}=[kpointconj,fill=gray!50!white]
\tikzstyle{gray kpointtrans}=[kpointtrans,fill=gray!50!white]
\tikzstyle{gray dkpoint}=[kpoint,fill=gray!50!white,doubled]
\tikzstyle{gray dkpointdag}=[kpointdag,fill=gray!50!white,doubled]
\tikzstyle{gray dkpointadj}=[kpointadj,fill=gray!50!white,doubled]
\tikzstyle{gray dkpointconj}=[kpointconj,fill=gray!50!white,doubled]
\tikzstyle{gray dkpointtrans}=[kpointtrans,fill=gray!50!white,doubled]
\tikzstyle{white label}=[draw,fill=white,rectangle,inner sep=0.7 mm]
\tikzstyle{gray label}=[draw,fill=gray!50!white,rectangle,inner sep=0.7 mm]
\tikzstyle{black label}=[draw,fill=black,rectangle,inner sep=0.7 mm]
\tikzstyle{dkpoint}=[kpoint,doubled]
\tikzstyle{wide dkpoint}=[wide kpoint,doubled]
\tikzstyle{dkpointdag}=[kpoint adjoint,doubled]
\tikzstyle{wide dkpointdag}=[wide kpointdag,doubled]
\tikzstyle{dkcopoint}=[kpoint adjoint,doubled]
\tikzstyle{dkpointadj}=[kpoint adjoint,doubled]
\tikzstyle{dkpointconj}=[kpoint conjugate,doubled]
\tikzstyle{dkpointtrans}=[kpoint transpose,doubled]
\tikzstyle{kscalar}=[kpoint common, shape=EBox, inner xsep=-1pt, inner ysep=3pt,font=\small]
\tikzstyle{kscalarconj}=[kpoint common, shape=WBox, inner xsep=-1pt, inner ysep=3pt,font=\small]
\tikzstyle{spekpoint}=[kpoint sc,minimum height=5mm,inner sep=3pt]
\tikzstyle{spekcopoint}=[kpoint adjoint sc,minimum height=5mm,inner sep=3pt]
\tikzstyle{dspekpoint}=[spekpoint,doubled]
\tikzstyle{dspekcopoint}=[spekcopoint,doubled]
 \tikzstyle{upground}=[circuit ee IEC,thick,ground,rotate=90,scale=1.4]
 \tikzstyle{upgroundnormal}=[circuit ee IEC,thick,ground,rotate=90,scale=2]
 \tikzstyle{downground}=[circuit ee IEC,thick,ground,rotate=-90,scale=1.4]
 \tikzstyle{bigground}=[regular polygon,regular polygon sides=3,draw=gray,scale=0.50,inner sep=-0.5pt,minimum width=10mm,fill=gray]
\tikzstyle{arrs}=[-latex,font=\small,auto]
\tikzstyle{arrow plain}=[arrs]
\tikzstyle{arrow dashed}=[dashed,arrs]
\tikzstyle{arrow bold}=[very thick,arrs]
\tikzstyle{arrow hide}=[draw=white!0,-]
\tikzstyle{arrow reverse}=[latex-]
\tikzstyle{cdnode}=[]
\let\olddagger\dagger
\renewcommand{\dagger}{\ensuremath{\olddagger}\xspace}
\theoremstyle{definition}
\newtheorem*{theorem*}{Theorem}
\newtheorem{example*}[theorem]{Example*}
\newtheorem{examples*}[theorem]{Examples*}
\newtheorem{remark*}[theorem]{Remark*}
\newtheorem{exer*}[theorem]{Exercise*}
\newkeycommand{\moral}[width=11cm][1]{\begin{center}
\fbox{\ \parbox{\commandkey{width}}{\centering #1\vphantom{Xy}}\ } 
\end{center}}
\newkeycommand{\morallong}[width=11cm][1]{\par\medskip\noindent
\centerline{\fbox{\ \parbox{\commandkey{width}}{\centering #1\vphantom{Xy}}\ }} 
\par\medskip\noindent}
\def\bR{\begin{color}{red}} 
\def\bB{\begin{color}{blue}}
\def\bM{\begin{color}{magenta}}
\def\bC{\begin{color}{cyan}}
\def\bW{\begin{color}{white}}
\def\bBl{\begin{color}{black}} 
\def\bG{\begin{color}{green}}
\def\bY{\begin{color}{yellow}}
\def\e{\end{color}\xspace}
\newcommand{\bit}{\begin{itemize}}
\newcommand{\eit}{\end{itemize}\par\noindent}
\newcommand{\ben}{\begin{enumerate}}
\newcommand{\een}{\end{enumerate}\par\noindent}
\newcommand{\beq}{\begin{equation}}
\newcommand{\eeq}{\end{equation}\par\noindent}
\newcommand{\beqa}{\begin{eqnarray*}}
\newcommand{\eeqa}{\end{eqnarray*}\par\noindent}
\newcommand{\beqn}{\begin{eqnarray}}
\newcommand{\eeqn}{\end{eqnarray}\par\noindent}
\begin{document}   
 
\title{Equivalence of relativistic causal structure and process terminality}  
\author{Aleks Kissinger}
\affiliation{Radboud University, iCIS, Nijmegen.}   
\author{Matty Hoban}
\affiliation{University of Edinburgh, School of Informatics.}   
\author{Bob Coecke}
\affiliation{University of Oxford, Department of Computer Science.}      

\begin{abstract}
In general relativity, `causal structure'  refers to the partial order on space-time points (or regions) that encodes time-like relationships.  Recently,  quantum information and quantum foundations saw the emergence of a   `causality principle'.  In the form used in this paper, which we call `process terminality', it states that when the output of a process is discarded, then the process itself may as well be discarded.  While causal structure and process terminality at first seem to be entirely different notions, they become equivalent when making explicit what the partial order actually encodes, that is, that an event in the past can influence one in its future, but not vice-versa. The framework in which we establish this equivalence is that of process theories.
We show how several previous results are instances of this result and comment on how this framework could provide an ideal, minimal canvas for crafting theories of quantum gravity.
\end{abstract} 

\pacs{03.65.Ca, 04.20.Cv}
\keywords{}    
\maketitle

\em Causal structure \em is the partial order on spacetime points obtained by setting $x\leq y$ whenever $x$ causally precedes $y$. Starting with the first singularity theorems \cite{penrose1965gravitational, hawking1973large} causal structure has played a prominent role in General Relativity (GR).  In fact, building further on \cite{kronheimer1967structure,penroseorder, Malament}, in \cite{MartinPanagaden} it was shown that for globally hyperbolic spacetimes, the entire spacetime can be reconstructed from the partial order on spacetime points \footnote{In the earlier papers \cite{penroseorder,Malament}, in order to reconstruct the entire spacetime, smoothness of curves needed to be assumed.  Using techniques from domain theory, this assumption was dropped in \cite{MartinPanagaden}.}.  Hence, 
for an important class of spacetimes,  causal structure `is' GR.      

More recently, both within the context of quantum information and quantum foundations, a new principle, called simply \em causality\em, has been coined. Its initial aim was to capture `no-signalling from the future'  within the so-called \em operational probabilistic theories \em framework of Chiribella, D'Ariano and Perinotti~\cite{Chiri1, Chiri2}. A further generalisation of this axiom which makes sense for arbitrary \em process theories \em \cite{JTF, CKpaperI, CKBook} is the following  \cite{CRCaucat, Cnonsig, coecke2016terminality}:  
\begin{equation}\label{eq:term-words}
\begin{array}{c}
\mbox{\em When the output of a process is discarded,}\\ 
\mbox{\em then the process itself may also be discarded.\em} 
\end{array}
\end{equation}
Note that this principle also implies that when doing science we can ignore processes of which the output never reaches us before ending an experiment.  Clearly, such an assumption is absolutely crucial to being able to do science at all, and hence, shouldn't come as a surprise.

To avoid confusion, we refer to  principle (\ref{eq:term-words}) as \em process terminality\em. 
The generality of this definition is important in that, unlike for its instantiation in \cite{Chiri1, Chiri2}, it doesn't require \textit{a priori} the distinction between quantum systems and classical data or any reference to measurements or probabilities, which is also
the case in GR.

Using the diagrammatic language of process theories which we explain below, process terminality is expressed by the following equation:
\begin{equation}\label{eq:term}
  \tikzfig{term}
\end{equation}
Here $f$ is a process and the `ground'-symbol represents a special process with no outputs called \textit{discarding}.  Note in particular that there is indeed no reference to measurements, only to the general notion of a process, which could either be classical, quantum, a hybrid of both \cite{CQMII, CKBook}, or even something that has nothing to do with quantum or classical theory.

It has already been shown that from (\ref{eq:term}) one can in fact derive `no-signalling' for two parties  \cite{coecke2016terminality}. This clearly hints at the fact that there is a close connection between (\ref{eq:term}) and the theory of relativity.  

Still, a partially ordered set and axiom (\ref{eq:term}) look very different. However, by explicitly equipping causal structure with its intended interpretation, namely that it encodes forward causation, we can establish  equivalence of causal structure and process terminality within the language of process theories.

It has already been noted in the literature that no-signalling is a strictly weaker notion than relativistic causality \cite{horodecki2016relativistic}. By taking into account the causal structure of measurement events, one can put restrictions on correlations observed in an experiment.
Our work demonstrates how no-signalling  can be generalised as (\ref{eq:term}).

Our work also forms part of an ongoing effort in the study of an operational form of causality in theories beyond quantum theory. On a related note, there have been efforts to generalise the framework of Bayesian networks developed by Pearl and others \cite{PearlBook} to quantum theory and beyond. Henson, Lal, and Pusey demonstrated that certain operational conditional independences (like the no-signalling constraint) hold for classical data in an operational probabilistic theory that satisfies causality; this generalises the notion of d-separation as defined within classical Bayesian networks \cite{HLP}. 

It should be noted that the constraints in the work of Henson, Lal, and Pusey are made upon classical data embedded in a theory, and not objects within a general theory itself. There has been recent, exciting progress on developing a notion of quantum Bayesian network, or quantum causal network where constraints can be imposed directly on quantum systems \cite{FritzII,Leifer1,Pienaar,Costa,Allen}. Note that causal structure is distinct from the approach of associating a causal network with certain processes. In particular, given a specific causal structure, there are multiple causal networks compatible with it, but a causal network will often fix allowed causal structures. 

Other related work includes the derivation by Markopoulou and others of covariance in discrete causal quantum structures \cite{Markopoulou, BIP}, or more general structures akin to ones considered in this paper \cite{CRCaucat}.

\section{Process theories}

By a \em  process theory \em we mean a collection of \em systems\em , represented by \em wires\em, and \em processes\em, represented by \em boxes \em with wires as inputs (at the bottom of the box) and outputs (at the top)  \cite{CKpaperI, CKBook}.  Moreover, when we plug these boxes together: 
\[
\tikzfig{compound-process}
\]
the resulting \em diagram \em should also be a process. 

For the purposes of this paper, outputs should be connected to inputs, and   diagrams should never contain causal loops. A diagram of processes with no directed cycles is called a \textit{circuit}.  A \em state \em is a process without inputs, and an \em effect \em is a process without outputs.  Processes without inputs or outputs are called \textit{numbers}. In the case of classical and quantum theory, these represent probabilities. 

Diagrams in process theories come with the interpretation that wires allow for a flow of information.  Hence, if two processes are disconnected: 
\[
\tikzfig{disconnect}
\]
there cannot be any information exchange between them. This makes diagrams a useful mathematical tool to encode causation \cite{CRCaucat}, generalising notions such as Bayesian networks~\cite{PearlBook}, which can be seen as such circuits whose states are probability distributions and processes are conditional probability distributions.

Generalising states and processes to include density operators and CP-maps, one can capture the essential structure of quantum theory, including quantum-classical interaction and complementarity~\cite{CKBook}.

\section{Process terminality}

We will  assume that for each system in a process theory there exists a designated \em discarding \em effect:
\[
\tikzfig{disc}
\]
For a process, \em terminality \em means that (\ref{eq:term}) holds, and for a process theory, it means that it holds for all processes. 

When viewing probability theory as a process theory,  
discarding a classical system $A$ amounts to summing over all of its possible values~\cite{CKBook}. Hence, terminality for states means that $\sum_a P(A = a) = 1$, and terminality for processes means that for all $a \in A$, $\sum_b P(B = b|A = a) = 1$.

When viewing quantum theory as a process theory,   discarding  of quantum systems is  the trace \cite{CKBook}.  Hence, terminality means that density matrices have diagonal one, and that CP-maps are trace-preserving.  


%

\section{causal structure}

We will now explicitly encode the condition that only events in the causal past of a particular event can have an impact on that event. 

To any causal structure, we can associate a diagram of processes as follows. For each node $A$ in the causal structure, we introduce a new box (also labelled $A$) with an input $\inp A$ and an output $\outp A$. Then, we introduce a wire from an $A$-labelled box to a $B$-labelled box if and only if $B$ is a successor of $A$ in the causal structure:
\begin{equation}\label{eq:caus-circ}
\tikzfig{caus-ord} \!\!\mapsto\ \ \tikzfig{caus-circ}\!\!
\end{equation}
We say an event $A$ is capable of \em affecting \em an event $B$ if and only if, by changing the input at $\inp A$ one is able to affect the output at $\outp B$. Operationally, we can interpret this an observer at $A$ being able to send a message to an observer at $B$. However, we can also choose to treat this as a purely formal notion of
a happening at $A$ impacting a happening at $B$, without making reference to an observer.

It should be the case the the only events capable of affecting a given event are those in its causal past.
However, by simply  considering diagrams of processes which decompose according to a given causal structure as in \eqref{eq:caus-circ} 
without any further constraints, it is easy to pick processes in some process theory where an event $B$ can be affected by events not in its causal past. For example, ignoring $A$'s input and $B$'s output, by means of a Bell state and a Bell effect, written as a cap-shaped and a cup-shaped wire respectively \cite{Kindergarten, CKpaperI, CKBook}, there is a clear information flow from future to past:  
\ctikzfig{caus-tele}  
The underlying reason is of course that Bell-effects cannot be performed with certainty, and hence do not constitute a single box in a causal theory. 

We will constrain process theories such that the output of a given process only depends on what happened in the causal past of that process, and hence, for example, they will not contain Bell-effects. We will say that these process theories \em respect causal structure.\em



Suppose we have a circuit 
$\bm\Phi$ whose boxes are:  
\[
\mathcal B := \{A, B, C, \ldots \}
\]
with each box $A$ having input $\inp A$ and output $\outp A$. 
One can represent this circuit as one big box:
\ctikzfig{proc}

Now, write $A \leq B$ if $A = B$ or there exists a directed path of wires from box $A$ to box $B$ in the circuit $\bm\Phi$. Then for any subset of boxes $\mathcal E \subseteq \mathcal B$, let $\past(\mathcal E)$ be the boxes in the 
causal past of $\mathcal E$, i.e.
\[
\past(\mathcal E) := \{ A \,|\, \exists B \in \mathcal E : A \leq B \} 
\]
Then the outputs $\outp{\mathcal E}$ of the boxes in $\mathcal E$ should only be affected by the inputs $\inp{\past(\mathcal E)}$ of the boxes in $\past(\mathcal E)$.  In other words, discarding all of the outputs of $\bm\Phi$ except for $\outp{\mathcal E}$ should give the following factorisation:
\begin{equation}\label{eq:causality}
\tikzfig{satisfies}
\end{equation}
for some process process $\bm\Phi'$. Indeed, concerning outputs $\outp{\mathcal E}$ (cf.~we discard the other outputs in LHS), the inputs not in $\inp{\past(\mathcal E)}$ must be discarded (cf.~RHS) as they may not affect $\outp{\mathcal E}$. Hence we can conclude that a process theory respects causal structure if \eqref{eq:causality} is satisfied for all circuits $\bm\Phi$ and subsets $\mathcal E$.

\section{Proof of equivalence}

\begin{theorem*}
  A process theory respects causal structure if and only if it is terminal.
\end{theorem*}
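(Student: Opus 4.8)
The statement is an `if and only if', so the plan is to prove the two implications separately, with essentially all of the work living in the \emph{terminal $\Rightarrow$ respects causal structure} direction.

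For the easy direction (respects $\Rightarrow$ terminal), I would take an arbitrary process $f$, regard it as a one-box circuit $\bm\Phi$ with $\mathcal B = \{A\}$, where $\inp A$ is the input of $f$ and $\outp A$ its output, and instantiate the defining condition \eqref{eq:causality} at $\mathcal E = \emptyset$. Since $\past(\emptyset)=\emptyset$, the right-hand side collapses to the bare discards on $\inp A$, while the left-hand side discards every output of $f$; reading off the resulting equation is then literally \eqref{eq:term}, i.e.\ terminality of $f$. As $f$ was arbitrary, the theory is terminal.

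For the main direction I would argue by peeling boxes off in reverse causal order. The key structural observation is that $\past(\mathcal E)$ is downward closed: if $A \leq A' $ and $ A' \leq B$ with $B \in \mathcal E$, then $A \leq B$, so $A \in \past(\mathcal E)$. Hence its complement $\mathcal U := \mathcal B \setminus \past(\mathcal E)$ is an up-set, and since $\mathcal E \subseteq \past(\mathcal E)$ every box of $\mathcal U$ has its external output $\outp{(\cdot)}$ among those discarded on the left-hand side of \eqref{eq:causality}. I would then induct on $|\mathcal U|$. Pick a box $C$ maximal in $\mathcal U$; because $\mathcal U$ is up-closed, $C$ has no successor box at all, so all of its outputs are either its discarded external output $\outp C$ or nonexistent, i.e.\ every output of $C$ is discarded. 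Terminality \eqref{eq:term} then lets me replace $C$ together with the discards on its outputs by discards on all of its inputs: the external input $\inp C$ becomes a free discarded input, while each internal input wire turns into a discard on the output of the corresponding predecessor box. Removing $C$ yields a strictly smaller circuit in which $\mathcal U \setminus \{C\}$ is still an up-set all of whose members have discarded external outputs, so the induction hypothesis applies.

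When the induction terminates, every box of $\mathcal U$ has been erased, the external inputs $\inp{\mathcal B \setminus \past(\mathcal E)}$ have all been discarded, and what remains is precisely the sub-circuit on $\past(\mathcal E)$ with all outputs other than $\outp{\mathcal E}$ discarded; calling this $\bm\Phi'$ delivers the factorisation \eqref{eq:causality}. The step I expect to be the main obstacle is making the diagrammatic bookkeeping of this peeling rigorous: one must verify that discards genuinely propagate backwards wire-by-wire, so that applying terminality at $C$ never discards a wire still feeding a surviving box, and that after each removal the residual discards sit exactly on inputs in $\inp{\mathcal B \setminus \past(\mathcal E)}$ and on internal wires absorbed into $\bm\Phi'$, never on a wire leading to $\outp{\mathcal E}$. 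This hinges entirely on the up-set property of $\mathcal U$, which guarantees at every stage a maximal box with no downstream dependence on $\mathcal E$.
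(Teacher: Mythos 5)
Your proof is correct and follows essentially the same route as the paper: the one-box circuit with $\mathcal E = \emptyset$ for the easy direction, and for the converse the observation that $\past(\mathcal E)$ is downward closed, so its complement always contains a maximal box whose outputs are all discarded, allowing terminality to peel boxes off one at a time until only the sub-circuit on $\past(\mathcal E)$ (with the appropriate discards) remains. The only cosmetic difference is that you package the peeling as an induction on $|\mathcal B \setminus \past(\mathcal E)|$ where the paper phrases it as an iterated procedure.
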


\begin{proof}
  First suppose a process theory respects causal structure. Then, we can treat every process as a circuit $\bm\Phi$ for a single node $A$:
  \ctikzfig{single-proc_copy}
  The set of boxes of $\bm\Phi$ is the singleton $\{ A \}$. The only subset of $\mathcal B$ giving a non-trivial expression for \eqref{eq:causality} is $\mathcal E := \emptyset$. This gives:
  \ctikzfig{single-proc-caus_copy}
  which is exactly the terminality equation \eqref{eq:term}.

  Conversely, suppose every process in a process theory satisfies the terminality equation \eqref{eq:term}.  Fix a circuit $\bm\Phi$, 
  a subset $\mathcal E \subseteq \mathcal B$, and let ${\bm\Phi}|_{\mathcal E}$ be the sub-circuit of $\bm\Phi$ whose boxes are in $\past(\mathcal E)$. For example, taking $\bm\Phi$ from \eqref{eq:caus-circ}, and $\mathcal E := \{ B, C \}$ yields:
  \begin{equation}\label{eq:sub-circ}
    \tikzfig{sub-circ}
  \end{equation}
  Then, let $\bm\Phi'$ be ${\bm\Phi}|_{\mathcal E}$ with all of the outputs not in $\outp{\mathcal E}$ discarded. In the example above, $\outp{\mathcal E} = \{ \outp B, \outp C \}$, so:
  \ctikzfig{sub-circ-F}
  Since $\mathcal E \subseteq \past(\mathcal E)$, the outputs of $\bm\Phi'$ will be $\outp{\mathcal E}$, and since $\past(\mathcal E)$ is  \textit{downward closed} (i.e. it contains all of the boxes in its causal past), its inputs will be $\inp{\past(\mathcal E)}$.
   Thus it remains to show that discarding the outputs $\outp{\bm\Phi} \backslash \outp{\mathcal E}$ causes the boxes in $\mathcal B \backslash \past(\mathcal E)$ to vanish, leaving their inputs discarded.

   Call a box \textit{maximal} in the circuit if its outputs are not connected to another box. Since $\past(\mathcal E)$ is downward closed, then either $\mathcal B \backslash \past(\mathcal E)$ contains a maximal box or it is empty.
   If $\mathcal B \backslash \past(\mathcal E)$ is empty we are done, so suppose it contains a maximal box $A$.
   Since none of the outputs of $A$ are in $\outp{\mathcal E}$, they are all discarded in the LHS of \eqref{eq:causality}. Hence we can apply \eqref{eq:term} to remove $A$ and discard its inputs.
   For example, we can choose a maximal box $E \in \mathcal B \backslash \past(\mathcal E)$ in diagram \eqref{eq:sub-circ} and apply Equation~\eqref{eq:term} to obtain:
   \[ \tikzfig{sub-circ-disc}=\tikzfig{sub-circ-disc1} \]
   The set $\past(\mathcal E)$ remains downward closed, so we can repeat this procedure, choosing the next maximal box not in $\past(\mathcal E)$ until nothing but $\bm\Phi'$ and the discarded inputs of $\mathcal B \backslash \past(\mathcal E)$ are left. Hence we obtain \eqref{eq:causality} for all circuits $\bm\Phi$, so the process theory respects causal structure.
\end{proof}

\section{Re-casting previous works}


Diagrams with associated causal structure:
\ctikzfig{caus-ord-V}
where we moreover take the local input and output at $C$ to be trivial (i.e.~no input/output) result in the Bell-scenario typical for evaluating whether a theory is signalling:   
\ctikzfig{Bell}
When restricting  `respecting causal structure' in the statement of our theorem to these Bell-scenarios, the equivalence still holds, provided all diagrams with no inputs or outputs coincide. In this manner we recover the proof of no-signalling from terminality of \cite{Cnonsig,coecke2016terminality}.  

Similarly, we can recover generalised covariance as in \cite{CRCaucat,coecke2016terminality}, which generalised earlier results in \cite{Markopoulou, BIP}.

%
%

\section{Conclusion}

In (\ref{eq:term}) we have found a substitute for the idea of a causal structure, which has a very simple form, as well as a very clear interpretation,   and we pointed out that (\ref{eq:term})  
 is absolutely crucial to being able to do science at all.

Condition (\ref{eq:term}) moreover applies to arbitrary process theories and hence provides a starting point for crafting new causal theories. The upshot is that one does not need any prior commitment to a particular causal structure and can be readily generalised to conditions on processes which exhibit indefinite causal structure, as shown in~\cite{KU}. Such  processes include the quantum switch~\cite{QSwitch} and the causally non-separable processes of Oreskov, Costa, and Bruker~\cite{ViennaIndef}. Hence, this paves the way for crafting toy theories where there is an important interaction between causal structure and quantum structure, and ultimately, this could pave a way to a theory of quantum gravity.

\section{Acknowledgements}

This work is supported by the ERC under the European Union’s Seventh Framework Programme (FP7/2007-2013) / ERC grant n$^\text{o}$ 320571 and
a grant from the John Templeton Foundation. The opinions expressed in this publication are those of the author(s) and do not necessarily reflect the views of the John Templeton Foundation.

\bibliographystyle{apsrev}
\bibliography{main}
\end{document}